\documentclass[11pt]{article}

\usepackage{graphicx}
\usepackage{amsmath,amsthm}
\usepackage{amssymb}
\usepackage{fullpage}
\usepackage{enumerate}
\usepackage{graphicx}
\usepackage{hyperref}

%
%
%


\usepackage{xy}
\xyoption{matrix}
\xyoption{frame}
\xyoption{arrow}
\xyoption{arc}

\usepackage{ifpdf}
\ifpdf
\else
\PackageWarningNoLine{Qcircuit}{Qcircuit is loading in Postscript mode.  The Xy-pic options ps and dvips will be loaded.  If you wish to use other Postscript drivers for Xy-pic, you must modify the code in Qcircuit.tex}
\xyoption{ps}
\xyoption{dvips}
\fi

\entrymodifiers={!C\entrybox}

\newcommand{\bra}[1]{{\left\langle{#1}\right\vert}}
\newcommand{\ket}[1]{{\left\vert{#1}\right\rangle}}
\newcommand{\qw}[1][-1]{\ar @{-} [0,#1]}
\newcommand{\multigate}[2]{*+<1em,.9em>{\hphantom{#2}} \POS [0,0]="i",[0,0].[#1,0]="e",!C *{#2},"e"+UR;"e"+UL **\dir{-};"e"+DL **\dir{-};"e"+DR **\dir{-};"e"+UR **\dir{-},"i" \qw}
\newcommand{\ghost}[1]{*+<1em,.9em>{\hphantom{#1}} \qw}



\newcommand{\lstick}[1]{*!R!<.5em,0em>=<0em>{#1}}


\newcommand{\Qcircuit}{\xymatrix @*=<0em>}




\def\ket#1{| #1 \rangle}
\def\bra#1{\langle #1 |}
\def\bracket#1#2{\langle #1 | #2 \rangle}

\newtheorem{theorem}{Theorem}
\newtheorem{lemma}[theorem]{Lemma}

\title{\bf Characterization of Binary Constraint System Games}

\author{Richard Cleve \thanks{Institute for Quantum Computing and School of Computer Science, University of Waterloo}
\and Rajat Mittal \thanks{Institute for Quantum Computing and Department of Combinatorics and Optimization, University of Waterloo}}

\date{}


\begin{document}

\maketitle 

\begin{abstract}
We consider a class of nonlocal games that are related to binary constraint systems (BCSs) in a manner similar to the games implicit in the work of Mermin 
[N.~D.~Mermin, ``Simple unified form for the major no-hidden-variables theorems," 
\textit{Phys.\ Rev.\ Lett.}, \textbf{65}(27):3373--3376, 1990], 
but generalized to $n$ binary variables and $m$ constraints.
We show that, whenever there is a perfect entangled protocol for such a game, there exists a set of binary observables with commutations and products similar to those exhibited by Mermin.
We also show how to derive upper bounds strictly below 1 for the the maximum entangled success probability of some BCS games.
These results are partial progress towards a larger project to determine the computational complexity of deciding whether a given instance of a BCS game admits a perfect entangled strategy or not.
\end{abstract}

\section{Binary constraint system games}

Constraint systems and various two-player non-local games associated with them have played an important role in both computational complexity theory (probabilistic interactive proof systems~\cite{Ben-OrG+1988,BabaiF+1991,FeigeG+1996,AroraL+1998} and the hardness of approximation~\cite{FeigeG+1996}) and quantum information (pertaining to the power of entanglement~\cite{Bell1964,ClauserH+1969,Tsirelson1980,CleveH+2004}).

A \textit{binary constraint system (BCS)} consists of $n$ binary variables, $v_1, v_2, \dots, v_n$, and $m$ constraints, $c_1, c_2, \dots, c_m$, where each $c_j$ is a binary-valued function of a subset of the variables.
For convenience, we may write the constraints as equations.
An example of a BCS (with $n=9$ and $m=6$) is 
\begin{eqnarray}\label{eq:ms}
v_1 \oplus v_2 \oplus v_3 & = & 0 
\ \ \ \ \ \ \ \ \ \ \ \ \ \ \ \ 
v_1 \oplus v_4 \oplus v_7 \ = \ 0 \nonumber \\
v_4 \oplus v_5 \oplus v_6 & = & 0 
\ \ \ \ \ \ \ \ \ \ \ \ \ \ \ \ 
v_2 \oplus v_5 \oplus v_8 \ = \ 0 \\
v_7 \oplus v_8 \oplus v_9 & = & 0 
\ \ \ \ \ \ \ \ \ \ \ \ \ \ \ \ 
v_3 \oplus v_6 \oplus v_9 \ = \ 1 \nonumber
\end{eqnarray}
(this BCS is related to the version of Bell's theorem introduced by Mermin \cite{Mermin1990}, that is discussed further in the next section).
If, as in this example, all the constraints are functions of the parity of a subset of variables we call the system a \textit{parity BCS}.
A BCS is \textit{satisfiable} if there exists a truth assignment to the variables that satisfies every constraint.
The above example is easily seen to be unsatisfiable (since summing all the equations modulo 2 yields $0 = 1$).

We can associate a two-player non-local game with each BCS that proceeds as follows.
There are two cooperating players, Alice and Bob, who cannot communicate with each other once the porotocol starts, and a verifier.
The verifier randomly (uniformly) selects one constraint $c_s$ and one variable $x_t$ from $c_s$.
The verifier sends $s$ to Alice and $t$ to Bob.
Alice returns a truth assignment to all variables in $c_s$ and Bob returns a truth assignment to variable $x_t$.
The verifier accepts the answer if and only if: 
\begin{enumerate}
\item Alice's truth assignment \textit{satisfies} the constraint $c_s$; 
\item Bob's truth assignment for $x_t$ is \textit{consistent} with Alice's.
\end{enumerate}
Strategies where Alice and Bob employ no entanglement are called \textit{classical}.
Strategies where they employ entanglement are called \textit{quantum} (or \textit{entangled}).
A strategy is \textit{perfect} if it always succeeds.

It is not too hard to see that there exists a perfect \textit{classical} strategy for a BCS game if and only if the underlying BCS is satisfiable.
It is interesting that there exist perfect entangled strategies for BCS games for some unsatisfiable BCSs.

\section{Mermin's quantum strategies}\label{sec:mermin}

Mermin~\cite{Mermin1990,Mermin1993} made a remarkable discovery about sets of observables with certain properties that has consequences for quantum strategies for BCS games%
\footnote{Mermin's original paper was written in the language of no-hidden-variables theorems, along the lines of the Kochen Specker Theorem; however, it discusses implications regarding Bell inequality violations, and these can be interpreted as non-local games where quantum strategies exist that outperform classical strategies.
The connection is made more explicit by Aravind~\cite{Aravind2002,Aravind2004}.}
that are unsatisfiable---in particular the following two games.
The left side of Fig.~\ref{fig:one} summarizes the BCS specified by the aforementioned system of equations~(\ref{eq:ms}).
We refer to this BCS as the \textit{magic square}.
\begin{figure}[ht!]
\begin{center}
\setlength{\unitlength}{1mm}
\begin{picture}(20,20)(0,0)
\linethickness{0.5pt}
\put(-2,19){$v_1$}
\put(8,19){$v_2$}
\put(18,19){$v_3$}
\put(-2,9){$v_4$}
\put(8,9){$v_5$}
\put(18,9){$v_6$}
\put(-2,-1){$v_7$}
\put(8,-1){$v_8$}
\put(18,-1){$v_{9}$}
\put(2,0){\line(1,0){5}}
\put(12,0){\line(1,0){5}}
\put(2,10){\line(1,0){5}}
\put(12,10){\line(1,0){5}}
\put(2,20){\line(1,0){5}}
\put(12,20){\line(1,0){5}}
\put(0,2){\line(0,1){5.5}}
\put(0,12){\line(0,1){5.5}}
\put(10,2){\line(0,1){5.5}}
\put(10,12){\line(0,1){5.5}}
\put(19.7,2){\line(0,1){5.5}}
\put(19.7,12){\line(0,1){5.5}}
\put(20.3,2){\line(0,1){5.5}}
\put(20.3,12){\line(0,1){5.5}}
\end{picture}
\hspace*{30mm}
\begin{picture}(40,35)(0,0)
\linethickness{0.5pt}
\put(2,19.7){\line(1,0){10.5}}
\put(16.8,19.7){\line(1,0){5.8}}
\put(26.8,19.7){\line(1,0){10.5}}
\put(2,20.3){\line(1,0){10.5}}
\put(16.8,20.3){\line(1,0){5.8}}
\put(26.8,20.3){\line(1,0){10.5}}
\put(19.5,32.9){\line(-1,-3){3.7}}
\put(14.3,18){\line(-1,-3){1.8}}
\put(11.3,9){\line(-1,-3){3.6}}
\put(20.6,32.9){\line(1,-3){3.7}}
\put(25.7,17.5){\line(1,-3){1.6}}
\put(28.6,9){\line(1,-3){3.6}}
\put(38.2,18.5){\line(-4,-3){8.5}}
\put(26.1,9.4){\line(-4,-3){4.1}}
\put(18.2,3.9){\line(-4,-3){8.7}}
\put(1.7,18.5){\line(4,-3){8.5}}
\put(14.7,9.0){\line(4,-3){3.8}}
\put(22.3,3.1){\line(4,-3){8.0}}
\put(18.2,34){$v_8$}
\put(-2,19){$v_7$}
\put(13,19){$v_5$}
\put(23,19){$v_1$}
\put(38,19){$v_9$}
\put(10.9,10){$v_4$}
\put(26.8,10){$v_2$}
\put(18.5,4){$v_3$}
\put(6,-4.5){$v_6$}
\put(30.8,-4.5){$v_{10}$}
\end{picture}
\end{center}
\caption{\small Structure of two BCSs: (a) magic square (left) and (b) magic pentagram (right).
Each straight line indicates a parity constraint on its variables of 0 for single lines, and 1 for double lines.
}\label{fig:one}
\end{figure}
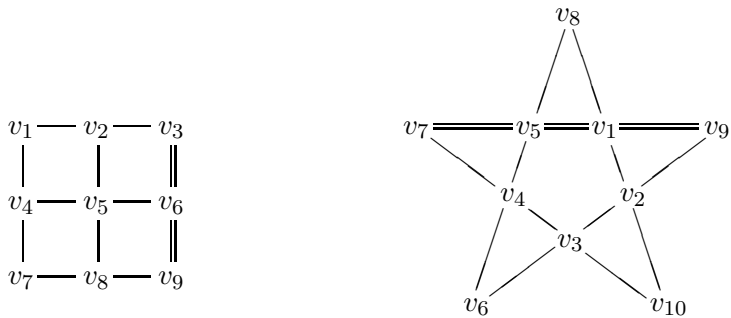
Similarly, the right side of Fig.~\ref{fig:one} summarizes another BCS consisting of ten variables and five constraints, where each constraint is related to the parity of four variables.
We refer to this BCS as the \textit{magic pentagram}.

To understand Mermin's strategies, we first define a \textit{quantum satisfying assignment} of a BCS as a relaxation of a classical satisfying assignment, in the following manner.
First translate each $\{0,1\}$-variable $v_j$ into a $\{+1,-1\}$-variable $V_j = (-1)^{v_j}$.
Then the parity of any sequence of variables is their product---and, in fact, every boolean function can be uniquely represented as a multilinear polynomial over $\mathbb{R}$ (e.g., for the binary OR-function, the polynomial is $(V_1 V_2 + V_1 + V_2 - 1)/2$).
Now we can define a quantum satisfying assignment as an assignment of finite-dimensional Hermitian operators 
$A_1, A_2, \dots, A_n$ to the variables $V_1, V_2, \dots, V_n$ (respectively) such that:
\begin{enumerate}[(a)]
\item
Each $A_j$ is a binary observable in that its eigenvalues are in $\{+1,-1\}$ (i.e.,$A_j^2 = I$).       
\item
All pairs of observables, $A_i$, $A_j$, that appear within the same constraint are commuting (i.e., 
they satisfy $A_i  A_j = A_j  A_i$).
\item
The observables \textit{satisfy} each constraint $c_s : \{+1,-1\}^k \rightarrow \{+1,-1\}$ that 
acts on variables $V_{i_1}, \dots, V_{i_k}$, in the sense that the multilinear polynomial equation $c_s(A_{i_1},\dots,A_{i_k}) = -I$ is satisfied.
\end{enumerate}
This is a relaxation of the standard ``classical" notion of a satisfying assignment (which corresponds to the case of one-dimensional observables).
Quantum satisfying assignments for the two BCSs in Figure~\ref{fig:one} are shown in Figure~\ref{fig:two}.
\begin{figure}[ht!]
\begin{center}
\setlength{\unitlength}{1mm}
\begin{picture}(20,20)(0,0)
\linethickness{0.5pt}
\put(-2,19){\scriptsize $ZI$}
\put(8,19){\scriptsize $IZ$}
\put(18,19){\scriptsize $ZZ$}
\put(-2,9){\scriptsize $IX$}
\put(8,9){\scriptsize $XI$}
\put(18,9){\scriptsize $XX$}
\put(-2,-1){\scriptsize $ZX$}
\put(8,-1){\scriptsize $XZ$}
\put(18,-1){\scriptsize $YY$}
\put(3,0){\line(1,0){4}}
\put(13,0){\line(1,0){4}}
\put(3,10){\line(1,0){4}}
\put(13,10){\line(1,0){4}}
\put(3,20){\line(1,0){4}}
\put(13,20){\line(1,0){4}}
\put(0,2){\line(0,1){5.5}}
\put(0,12){\line(0,1){5.5}}
\put(10,2){\line(0,1){5.5}}
\put(10,12){\line(0,1){5.5}}
\put(19.7,2){\line(0,1){5.5}}
\put(19.7,12){\line(0,1){5.5}}
\put(20.3,2){\line(0,1){5.5}}
\put(20.3,12){\line(0,1){5.5}}
\end{picture}
\hspace*{30mm}
\begin{picture}(40,35)(0,0)
\linethickness{0.5pt}
\put(4,19.7){\line(1,0){6.5}}
\put(18.7,19.7){\line(1,0){2.2}}
\put(27.8,19.7){\line(1,0){9.7}}
\put(4,20.3){\line(1,0){6.5}}
\put(18.7,20.3){\line(1,0){2.2}}
\put(27.8,20.3){\line(1,0){9.7}}
\put(19.5,32.9){\line(-1,-3){3.7}}
\put(14.3,18){\line(-1,-3){1.8}}
\put(11.3,9){\line(-1,-3){3.6}}
\put(20.6,32.9){\line(1,-3){3.7}}
\put(25.7,17.5){\line(1,-3){1.6}}
\put(28.6,9){\line(1,-3){3.6}}
\put(38.2,18.5){\line(-4,-3){8.1}}
\put(26.1,9.4){\line(-4,-3){4.1}}
\put(18.1,3.5){\line(-4,-3){7.6}}
\put(1.7,18.5){\line(4,-3){8.5}}
\put(14.7,9.0){\line(4,-3){3.6}}
\put(22.3,3.1){\line(4,-3){7.5}}
\put(17.2,34){\scriptsize $ZII$}
\put(-4,19){\scriptsize $XXZ$}
\put(11,19){\scriptsize $ZXX$}
\put(21,19){\scriptsize $ZZZ$}
\put(38,19){\scriptsize $XZX$}
\put(9.9,10){\scriptsize $IXI$}
\put(25.8,10){\scriptsize $IZI$}
\put(17.5,4){\scriptsize $XII$}
\put(5,-4.5){\scriptsize $IIX$}
\put(29.8,-4.5){\scriptsize $IIZ$}
\end{picture}
\end{center}
\caption{\small Quantum satisfying assignments for: (a) magic square (left) and (b) magic pentagram (right).
($X$, $Y$, and $Z$ are the usual $2\!\times\!2$ Pauli matrices, and juxtaposition means tensor product.)}\label{fig:two}
\end{figure}
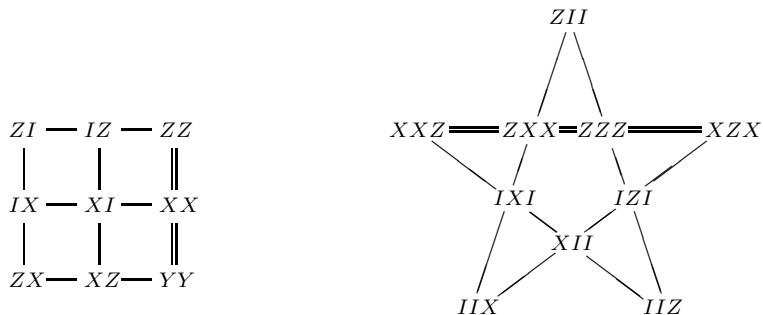

There is a construction (implicit in~\cite{Mermin1990} and explicit in~\cite{Aravind2004} for the magic square) that converts these quantum satisfying assignments into perfect strategies---and this is easily extendable to any quantum satisfying assignment of a BCS.
For completeness, we summarize the known construction.
The entanglement is of the form 
$\ket{\psi} = \frac{1}{\sqrt d}\sum_{j=1}^{d} \ket{j}\ket{j}$, 
where $d$ is the dimension of the observables.
Alice associates observables $A_1, A_2, \dots, A_n$ with the variables and Bob associates 
their transposes $A_{1}^{T}, A_{2}^{T} , \dots, A_{n}^{T}$ (with respect to the computational basis) with the variables.
On input $s$, Alice measures her observables that correspond to the variables in constraint $c_s$. 
At this point, it should be noted that this is a well-defined measurement since condition (b) implies that these observables are mutually commuting.
Also, on input $t$, Bob measures his observable $A_{t}^{T}$.
Condition (c) implies that Alice's output satisfies the constraint.
Finally, Alice and Bob give consistent values for variable $v_t$ because 
$\bra{\psi}A_t \otimes A_{t}^{T} \ket{\psi} 
= \bra{\psi} A_t \cdot A_{t} \otimes I \ket{\psi} 
= \bracket{\psi}{\psi} = 1$.

\section{General BCS games}\label{sec:questions}

A natural computational problem is: given a description of a BCS as input, determine whether or not it has a perfect entangled strategy.
A more general problem is to compute the maximum (or supremum) value of all entangled strategies.

For \textit{classical} strategies, the problem of determining whether or not a perfect strategy exists is NP-hard for general BCS games and in polynomial-time for parity BCS games (where the problem reduces to solving a system of linear equations in modulo 2 arithmetic).
For \textit{quantum} strategies, we are currently not aware of \textit{any} algorithm that determines whether or not an arbitrary parity BCS game has a perfect strategy (i.e., presently we do not even know that the problem is \textit{decidable}).

In Section~\ref{sec:characterization}, we prove a converse to the construction of entangled strategies from quantum satisfying assignments in Section~\ref{sec:mermin}.
Namely, we show that any perfect quantum strategy that uses countable-dimensional entanglement implies the existence of a quantum satisfying assignment.

It can be easily seen that not all BCS games have perfect quantum strategies, by this example 
\begin{equation}
v_1 \oplus v_2 = 0 
\ \ \ \ \ \ \ \ \ \ \ \ \ \ \ \ \ \ \ \ \ \ 
v_1 \oplus v_2 = 1.
\end{equation}
First note that no generality is lost if we assume that Alice returns only a value for $v_1$ (since the value of $v_2$ is then uniquely determined by the constraint).
It is not hard to see that such a game is equivalent to the so-called CHSH game~\cite{ClauserH+1969}, which is known to admit no perfect quantum strategy~\cite{Tsirelson1980} (even though the quantum success probability is higher than the classical success probability~\cite{ClauserH+1969}).
In Section~\ref{sec:gap}, we show how to derive upper bounds strictly below 1 on the entangled value of many parity BCSs.

\section{Characterization of perfect strategies in terms of observables}\label{sec:characterization}

\begin{theorem}\label{thm:characterization}
For any binary constraint system, if there exists a perfect quantum strategy for the corresponding BCS game that uses finite or countably-infinite dimensional entanglement, then it has a quantum satisfying assignment.
\end{theorem}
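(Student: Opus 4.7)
The plan is to read off the required observables directly from Alice's measurement operators in a perfect strategy, and to use the consistency condition against Bob's observables to show that Alice's per-constraint observables for the same variable must coincide. I would begin with two standard reductions: first, invoke Naimark's dilation to replace Alice's and Bob's POVMs by projective measurements on possibly enlarged but still separable Hilbert spaces; and second, restrict Alice's and Bob's spaces to the supports of their reduced density operators so that the shared state $\ket{\psi}$ has full Schmidt rank. On these restricted spaces, any operator $M$ satisfying $(M \otimes I)\ket{\psi} = 0$ must vanish, and symmetrically for Bob.

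Next I would introduce the candidate observables. Let $\{A_s^x\}_x$ denote Alice's projectors on input $s$, where $x$ ranges over truth assignments to the variables in $c_s$, and let $B_t := B_t^{+1} - B_t^{-1}$ be Bob's binary observable for variable $t$ in the $\pm 1$ convention. For each $j \in c_s$, define $A_{s,j} := \sum_x (-1)^{x_j} A_s^x$. Orthogonality and completeness of the $\{A_s^x\}$ immediately give $A_{s,j}^2 = I$, and for any $i,j \in c_s$, $A_{s,i} A_{s,j} = \sum_x (-1)^{x_i + x_j} A_s^x = A_{s,j} A_{s,i}$, establishing properties (a) and (b) for these ``per-constraint'' observables.

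To merge the per-constraint observables into a single observable $A_t$ per variable, I would use the consistency half of the perfect-strategy assumption: Alice's value for $v_t$ agrees with Bob's whenever $t \in c_s$, so $\bra{\psi} A_{s,t} \otimes B_t \ket{\psi} = 1$. Since $A_{s,t} \otimes B_t$ has spectrum in $\{\pm 1\}$, this forces $(A_{s,t} \otimes B_t)\ket{\psi} = \ket{\psi}$, and hence $(A_{s,t} \otimes I)\ket{\psi} = (I \otimes B_t)\ket{\psi}$. The right-hand side depends only on $t$, so full Schmidt rank yields $A_{s,t} = A_{s',t}$ for any two constraints $s, s' \ni t$, and I can unambiguously define $A_t$ to be this common operator. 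For the satisfaction half, $\bra{\psi} A_s^x \otimes I \ket{\psi} = 0$ for every unsatisfying $x$, which forces $A_s^x = 0$ for such $x$. The observables $A_{i_1}, \ldots, A_{i_k}$ with indices in $c_s$ are then simultaneously diagonalized by the surviving $A_s^x$ with joint eigenvalues $((-1)^{x_{i_1}}, \ldots, (-1)^{x_{i_k}})$, whence
\[
c_s(A_{i_1}, \ldots, A_{i_k}) \;=\; \sum_x c_s\bigl((-1)^{x_{i_1}}, \ldots, (-1)^{x_{i_k}}\bigr)\, A_s^x \;=\; -\sum_x A_s^x \;=\; -I,
\]
since by the paper's convention $c_s$ evaluates to $-1$ on each satisfying assignment. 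This is property (c), completing the construction.

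The main obstacle I anticipate is not algebraic but dimensional: Naimark dilation and the support-restriction step must be set up carefully in the separable-but-possibly-infinite-dimensional regime, since reduced density operators can have infinite-dimensional support and the Schmidt expansion may be an infinite series. However, all of the identities invoked above are purely algebraic consequences of having no zero Schmidt coefficients and of the measurement operators being orthogonal projectors, and these features survive the reductions in countable dimension, so the rest of the argument goes through without essential modification.
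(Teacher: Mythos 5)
Your algebraic core is correct and is essentially the paper's Case~1: you define $A_{s,j}=\sum_x(-1)^{x_j}A_s^x$, use $\bra{\psi}A_{s,t}\otimes B_t\ket{\psi}=1$ to force $(A_{s,t}\otimes I)\ket{\psi}=(I\otimes B_t)\ket{\psi}$ and hence non-contextuality via the absence of zero Schmidt coefficients, and kill the projectors of unsatisfying outcomes to get $c_s(A_{i_1},\dots,A_{i_k})=-I$. (Your direct route to constraint satisfaction via $A_s^x=0$ is a slightly cleaner packaging of the paper's second application of its Lemma~\ref{lemma}.) However, there are two genuine gaps. The first is exactly the step you flag as ``the main obstacle'' and then dismiss: Naimark dilation and restriction to the support do not commute in the way you need. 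After dilating, the shared state becomes $\ket{\psi}\otimes\ket{0}_{\mathrm{anc}}$, whose reduced state is supported on $\mathrm{supp}(\rho_A)\otimes\mathrm{span}\{\ket{0}\}$; the dilated projectors need not preserve this subspace, so compressing them to it yields a POVM, not a projective measurement. Thus your assertion that ``the measurement operators being orthogonal projectors\dots survive the reductions'' is unjustified and, naively read, false. The paper devotes its entire Case~2 to this point: it proves (Lemma~\ref{povm}, by decomposing the compressed POVM as a convex combination of projective measurements and averaging) that perfectness forces the compressed measurement to be projective. Any repair must invoke perfectness --- e.g.\ one can show the consistency condition gives $A_{s,t}\,\rho\,A_{s,t}=\rho$ for the dilated observables, so they preserve the support --- it is not a generic structural fact about dilations. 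This is precisely the pitfall the acknowledgments say was caught in an earlier version of the manuscript.

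Second, you never produce \emph{finite-dimensional} operators, which the paper's definition of a quantum satisfying assignment explicitly requires. When the entanglement is countably infinite, your support-restricted space is still infinite-dimensional, so your construction does not literally yield a quantum satisfying assignment. The paper closes this with a final step you omit: from $(A_t\otimes I)\ket{\psi}=(I\otimes B_t)\ket{\psi}$ and the uniqueness of the Schmidt decomposition, each $A_t$ preserves the multiplicity subspace of each distinct Schmidt coefficient; these subspaces are finite-dimensional because the coefficients are square-summable, so all observables simultaneously block-diagonalize and one takes a single finite block (on which the effective entanglement is maximally entangled) as the assignment. Both gaps are fillable, but both require arguments you have not supplied.
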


\begin{proof}
We start with an arbitrary binary constraint system with variables 
$v_1,v_2,\dots,v_n$ and constraints $c_1,c_2,\dots,c_m$.
Assume that there is a perfect entangled protocol for this system that uses entanglement 
\begin{equation}\label{eq:countable-entanglement}
\ket{\psi} = \sum_{i=1}^{\infty} \alpha_i \ket{\phi_i}\ket{\psi_i},
\end{equation}
where $\{\ket{\phi_1}, \ket{\phi_2}, \dots\}$ and $\{\ket{\psi_1}, \ket{\psi_2}, \dots\}$ are orthonormal sets, $\alpha_1, \alpha_2, \dots > 0$, and $\sum_{i=1}^{\infty} |\alpha_i|^2 = 1$.


We consider two separate cases for Alice's strategy. In the first case, she applies an arbitrary \textit{projective} measurement to the first register of $\ket{\psi}$.
In the second case, Alice can apply an arbitrary POVM measurement to the first register of $\ket{\psi}$.

We will prove that quantum satisfying assignment exists in the first case. 
Then we will show that the second case can be reduced to first one, hence proving the theorem.

\medskip

\noindent\textbf{Case 1: Projective measurements for Alice.} For each $s \in \{1,2,\dots,n\}$, let $c_s$ be a constraint consisting of $r_s$ variables.
Therefore, the set of outcomes for Alice is $\{0,1\}^{r_s}$.
These can be associated with 
orthogonal projectors $\Pi_{a}^{s}$ ($a \in \{0,1\}^{r_s}$).
From these projectors, we can define the $r$ individual bits of the outcome as the binary observables
\begin{equation}
A_{s}^{(j)} = \sum_{a \in \{0,1\}^{r_s}} (-1)^{a_j} \Pi_a,
\end{equation}
for $j \in \{1,\dots,r_s\}$.
It is easy to check that $\{A_{s}^{(j)} : j \in \{1,\dots,r_s\}\}$ is a set of commuting 
binary observables.
We have defined a binary observable for Alice for each variable in the context of each 
constraint that includes it.
For example, in the case of the magic square (Eqns.~(\ref{eq:ms})), there is a binary observable $A_3^{(1)}$ for 
$v_7$ in the context of the third constraint and a binary observable $A_4^{(3)}$ for $v_7$ 
in the context of the fourth constraint.
We have not yet shown that $A_3^{(1)} = A_4^{(3)}$ (non-contextuality).

The measurements for Bob are (without loss of generality) 
binary observables $B_t$ for each variable $v_t$ ($t \in \{1,2,\dots,n\}$).

We need to show that the observables for Alice must be non-contextual: for each variable, Alice's 
observables for it are the same, regardless of the constraint that they arise from 
(for example, for the magic square game, $A_3^{(1)} = A_4^{(3)}$).
We shall use the following lemma.


\begin{lemma}\label{lemma}
Let $-I \preceq C_1, C_2, B \preceq I$ be Hermitian matrices on some Hilbert space $\mathcal{H}$.
Let $\ket{\psi} \in \mathcal{H}\otimes\mathcal{H}$ be of the form
\begin{equation}
\ket{\psi} = \sum_{i=1}^{\infty} \alpha_i \ket{\phi_i}\ket{\psi_i},
\end{equation}
where $\{\ket{\phi_1},\ket{\phi_2},\dots\}$ and $\{\ket{\psi_1},\ket{\psi_2},\dots\}$ 
are orthonormal bases for $\mathcal{H}$, $\alpha_1,\alpha_2,\dots > 0$, and 
$\sum_{i=1}^{\infty} |\alpha_i|^2 = 1$.
Then, for the Hermitian matrices $ \{B, C_1, C_2\}$, if 
$\bra{\psi}B\otimes C_1\ket{\psi} = \bra{\psi}B\otimes C_2\ket{\psi} = 1$ 
then $C_1 = C_2$.
\end{lemma}

\begin{proof}[Proof of Lemma~\ref{lemma}]
Consider the vectors $w = B \otimes I \ket{\psi}$, $u_1 = I \otimes C_1 \ket{\psi}$, 
and $u_2 = I \otimes C_2 \ket{\psi}$.
These are vectors with length at most $1$ and we have $w \cdot u_1 = w \cdot u_2 = 1$, 
which implies that $u_1 = w = u_2$.
Therefore, 
\begin{eqnarray}
0 & = & I \otimes C_1 \ket{\psi} - I \otimes C_2 \ket{\psi} \\
& = & 
\left(I \otimes (C_1 - C_2)\right)
\left(\sum_{i=1}^{\infty} \alpha_i\ket{\phi_i}\ket{\psi_i}\right)\\
& = & \sum_{i=1}^{\infty} \alpha_i \ket{\phi_i}(C_1 - C_2)\ket{\psi_i},
\end{eqnarray}
which implies that $(C_1 - C_2)\ket{\phi_i} = 0$, for all $i \in \{1,2,\dots\}$.
This implies that $C_1 = C_2$, which completes the proof of the lemma.
\end{proof}

Returning to the proof of Theorem~\ref{thm:characterization}, let $t \in \{1,2,\dots,n\}$ and 
$A^{(j)}_{s}$ and $A^{(j^\prime)}_{s^\prime}$ be any two observables of Alice 
corresponding to the same variable $v_t$.
Since Alice's binary observables associated with constraint $c_s$ are commuting, 
we can assume that Alice begins her measurement process by measuring $A^{(j)}_{s}$, 
while Bob measures $B_t$.
Since these two measurements must yield the same outcome, we have 
$\bra{\psi}A^{(j)}_{s} \otimes B_t\ket{\psi} = 1$.
Similarly, $\bra{\psi}A^{(j^\prime)}_{s^\prime} \otimes B_t\ket{\psi} = 1$.
Therefore, applying Lemma~\ref{lemma}, we have $A^{(j)}_{s} = A^{(j^\prime)}_{s^\prime}$, 
which establishes that Alice's observables are non-contextual.

In addition to consistency between Alice and Bob, Alice's output bits must 
satisfy the constraint~$c_s$ (recall that $c_s$ can be expressed as a multilinear polynomial over $\mathbb{R}$).
That is,
\begin{equation}\label{eq:parity}
\bra{\psi}c_s(A^{(1)}_s,\dots,A^{(r_s)}_s) \otimes I \ket{\psi} = -1.
\end{equation}
By invoking Lemma~\ref{lemma} again, with 
$C_1 = -c_s(A^{(1)}_s,\dots,A^{(r_s)}_s)$, $C_2 = I$, 
$B=I$, we can deduce that 
$c_s(A^{(1)}_s,\dots,A^{(r_s)}_s) = -I$.

At this point, it is convenient to rename Alice's observables to $A_t$, for each 
$t \in \{1,2,\dots,n\}$ (which we can do because we proved they are non-contextual).
The observables associated with each constraint commute and their product has the 
required parity.

We will finally prove that a finite-dimensional set of observables must exist.
Since, for all $t \in \{1,2,\dots,n\}$, $\bra{\psi}A_t \otimes B_t\ket{\psi} = 1$, 
we have $A_t \otimes I \ket{\psi} = I \otimes B_t \ket{\psi}$, so
\begin{equation}\label{eq:schmidt}
\sum_{i=1}^{\infty} \alpha_i\left(A_t\ket{\phi_i}\right)\ket{\psi_i}
=
\sum_{i=1}^{\infty} \alpha_i\ket{\phi_i}\left(B_t\ket{\psi_i}\right).
\end{equation}
Both sides of Eq.~(\ref{eq:schmidt}) are Schmidt decompositions of the same quantum state.
Now we can use the fact that the Schmidt decomposition is unique up to a change of basis 
for the subspace associated with each distinct Schmidt coefficient.
Consider any Schmidt coefficient with multiplicity $d$ (each Schmidt coefficient appears 
with finite multiplicity).
Suppose, without loss of generality, that 
$\alpha_1 = \alpha_2 = \cdots = \alpha_d = \alpha$.
Then the span of $\left\{A_t\ket{\phi_i} : i \in \{1,2,\dots,d\}\right\}$ equals the 
span of $\left\{\ket{\phi_i} : i \in \{1,2,\dots,d\}\right\}$.
In other words, $A_t$ leaves the subspace spanned by 
$\left\{\ket{\phi_i} : i \in \{1,2,\dots,d\}\right\}$ fixed.
By similar reasoning, $B_t$ leaves the subspace spanned 
by $\left\{\ket{\psi_i} : i \in \{1,2,\dots,d\}\right\}$ fixed.
Therefore, there exist bases in which $A_t$ and $B_t$ have block decompositions of the form
\begin{equation}
A_t = \left(
\begin{array}{cccc}
A_{t}^{\prime} & 0 & 0 & \dots \\
0 & A_{t}^{\prime\prime} & 0 & \dots \\
0 & 0 & A_{t}^{\prime\prime\prime} & \dots \\
\vdots & \vdots & \vdots & \ddots
\end{array}
\right)
\ \ \ \ \ \ \ 
B_t = \left(
\begin{array}{cccc}
B_{t}^{\prime} & 0 & 0 & \dots \\
0 & B_{t}^{\prime\prime} & 0 & \dots \\
0 & 0 & B_{t}^{\prime\prime\prime} & \dots \\
\vdots & \vdots & \vdots & \ddots
\end{array}
\right)
\end{equation}
with one block for the subspace of each Schmidt coefficient.
We can take, say, the $d$-dimensional observables from the first block 
$\left\{A_{t}^{\prime} : t \in \{1,2,\dots,n\}\right\}$ as a quantum satisfying assignment 
(which changes the effective entanglement to a $d$-dimensional maximally entangled state).

\medskip

\noindent\textbf{Case 2: POVM measurements for Alice.}
A POVM measurement can be expressed as a projective measurement in a larger Hilbert space that includes ancilliary qubits, as shown in Figure~\ref{fig:stinespring}.
Again we can define binary observables for $j^{th}$ variable in a constraint $s$ as in Case~1.
\begin{figure}
\[ 
\Qcircuit @C=1.1em @R=.5em {
    & \multigate{6}{ ~A^{(1)}_s ~} & \qw  & \multigate{6}{ ~A^{(2)}_s ~} & \qw & \cdots &  ~ &\multigate{6}{~A^{(r_s)}_s ~} & \qw \\
    \lstick{\begin{array}{r} \mbox{input} \\ \mbox{state} \end{array}}  
     & \ghost{~A^{(1)}_s~} & \qw & \ghost{~A^{(2)}_s~} & \qw & \cdots & ~ & \ghost{~A^{(r_s)}_s~} & \qw \\   
     & \ghost{~A^{(1)}_s~} & \qw & \ghost{~A^{(2)}_s~} & \qw & \cdots & ~ & \ghost{~A^{(r_s)}_s~} & \qw \\ 
     & & ~ \\
\lstick{\ket{0}}     & \ghost{~A^{(1)}_s~} & \qw & \ghost{~A^{(2)}_s~} & \qw & \cdots & ~ & \ghost{~A^{(r_s)}_s~} & \qw \\    
\lstick{\ket{0}}     & \ghost{~A^{(1)}_s~} & \qw & \ghost{~A^{(2)}_s~} & \qw & \cdots & ~ & \ghost{~A^{(r_s)}_s~} & \qw \\ 
\lstick{\ket{0}}     & \ghost{~A^{(1)}_s~} & \qw & \ghost{~A^{(2)}_s~} & \qw & \cdots & ~ & \ghost{~A^{(r_s)}_s~} & \qw 
}
\]
\caption{\small Alice's POVM measurement on receiving input $s$ expressed in Stinespring form (Case 2).}\label{fig:stinespring}
\end{figure}
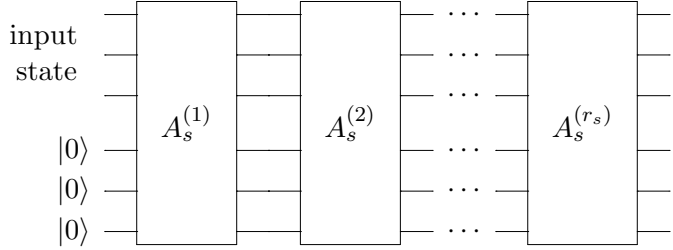
 
\begin{equation}
A_{s}^{(j)} = \sum_{a \in \{0,1\}^{r_s}} (-1)^{a_j} \Pi_a,
\end{equation}
these observables act on the larger Hilbert space $\mathcal{H}_s \otimes \mathcal{H}_p$. Here $\mathcal{H}_s$ ($\mathcal{H}_p$) represents the Hilbert space for the entangled (private) qubits. 
Like before, the $\{A_{s}^{(j)} : j \in \{1,\dots,r_s\}\}$ is a set of commuting binary observables.
Since these observables commute, without loss of generality, any of the corresponding variables can be measured first by Alice. 

We will focus on the first measurement done by Alice given some constraint. Let us suppress the superscript and subscript for brevity of notation. 
Say, Alice uses observable $A$ for the first measurement corresponding to variable $t$. This defines a projective measurement $(\Pi_0~=~\frac{A+I}{2}, \Pi_1~=~\frac{I-A}{2})$ on $\mathcal{H}_s \otimes \mathcal{H}_p$.

Suppose that the reduced entangled state on Alice's side is $\rho$. Then Alice's strategy is to apply the channel which adds the ancilla qubits to $\rho$ and then applies the measurement $(\Pi_0, \Pi_1)$. 
Using the Kraus operators of this channel, we can come up with \textit{equivalent} POVM elements $E_0, E_1$ acting on the Hilbert space $\mathcal{H}_s$. 
Here equivalent means, for all $i \in \{0,1\}$ and $\ket{\phi} \in \mathcal{H}_s$,
\begin{equation}
\label{equivalent}
\bra{\phi,00\dots 0} \Pi_i \ket{\phi,00\dots 0} = \bra{\phi} E_i \ket{\phi}.
\end{equation}
 
Similarly, Bob has POVM elements $(F_0, F_1)$ to measure variable $t$. Since their strategy is perfect, they always answer with same bit when asked for the variable $t$, 
which implies
\begin{equation}
\bra{\psi} E_0 \otimes F_0 \ket{\psi} + \bra{\psi} E_1 \otimes F_1 \ket{\psi} = 1.
\end{equation}
This can be simplified to 
\begin{equation}
\bra{\psi} (E_0 - E_1) \otimes (F_0 - F_1) \ket{\psi} = 1.
\end{equation}

Now we use the following lemma to prove that $(E_0, E_1)$ is actually a projective measurements (similarly $(F_0, F_1)$ is projective).

\begin{lemma}
\label{povm}
Let $\ket{\psi} \in \mathcal{H}_A \otimes \mathcal{H}_B$ be such that 
$\ket{\psi} = \sum_{i=1}^{\infty} \alpha_i \ket{\phi_i}\ket{\psi_i}$, where $\alpha_1, \alpha_2, \dots > 0$.  
If we have two POVM measurements, $(E_0, E_1)$ on $\mathcal{H}_A$ and $(F_0, F_1)$ on $\mathcal{H}_B$, such that 
\begin{equation}
\label{perfect}
\bra{\psi} (E_0 - E_1) \otimes (F_0 - F_1) \ket{\psi} = 1  
\end{equation}
then $(E_0, E_1)$ and $(F_0, F_1)$ are projective measurements.
\end{lemma}

\begin{proof}[Proof of Lemma~\ref{povm}]
We will prove that $(E_0, E_1)$ is a projective measurement. The proof for $(F_0, F_1)$ is the same.

Notice that $E_0$ and $E_1$ are simultaneously diagonalizable (they are both Hermitian and $E_0~+~E_1~=~I$). In the basis which diagonalizes them, 

\[ 
E_0 =  \left( \begin{array}{cccc}
\lambda_1 & ~ & ~ & ~ \\
~ & \lambda_2 & ~ & ~ \\
~ & ~ & \ddots & ~ \\
~ & ~ & ~ & \lambda_n
\end{array} \right)
~~\mbox{and}~~
E_1 =  \left( \begin{array}{cccc}
1-\lambda_1 & ~ & ~ & ~ \\
~ & 1-\lambda_2 & ~ & ~ \\
~ & ~ & \ddots & ~ \\
~ & ~ & ~ & 1-\lambda_n
\end{array} \right).
\]
This implies that $E_0$ and $E_1$ can be thought of as a probability distribution on 
$2^n$ projective measurements. 
For each $S \subseteq [n]$, define the projectors 
$\Pi^S_0 = \sum_{i \in S} \ket{i}\bra{i}$ and $\Pi^S_1 = I - \Pi^S_0$, 
and $p_S = \prod \limits_{i \in S} \lambda_i \prod \limits_{i \notin S} (1-\lambda_i)$.
Note that $\sum\limits_{S \subseteq [n]} p_S = 1$.
It is straightforward to verify that 
\begin{equation}\label{prob_dist}
E_0 = \sum_{S \subseteq [n]} p_S \Pi^S_0 
~~\mbox{and}~~ 
E_1 = \sum_{S \subseteq [n]} p_S \Pi^S_1.
\end{equation}
By Eqns.~(\ref{perfect}), (\ref{prob_dist}), and linearity,
\begin{equation}
\sum_{S \subseteq [n]} p_S ~ \bra{\psi} (\Pi^S_0 - \Pi^S_1) \otimes (F_0 - F_1)\ket{\psi} = 1.
\end{equation}

In the above equation, $p_S$'s sum up to $1$, and the term multiplied to them is at most $1$. By an averaging argument, for all $S$,
\begin{equation}
\label{eq:consistent3}
\bra{\psi} (\Pi^S_0 - \Pi^S_1) \otimes (F_0 - F_1)\ket{\psi} = 1.
\end{equation}

Using Lemma~\ref{lemma}, there can be at most one $p_S$ with non-zero probability. Hence $(E_0, E_1)$ is a projective measurement.

\end{proof}

Now we know that $(E_0, E_1)$ is a projective measurement. 
Also, using Eq.~(\ref{equivalent}), any eigenvector $\ket{\phi}$ of $E_0$ can be converted into an eigenvector $\ket{\phi, 00\cdots 0}$ for $\Pi_0$ with same eigenvalue.
Then, in the basis where eigenvalues of the form $\ket{\phi,00\cdots 0}$ are listed first, 

\begin{equation}
\Pi_0 = \left( \begin{array}{c|c} 
E_0 & 0~\cdots ~0 \\
\hline
\begin{array}{c}
0 \\
\vdots \\
0 \end{array}
&
\mbox{\LARGE{$M_0$}} 
\end{array}
\right)
~~~\mbox{and}~~~
\Pi_1 = \left( \begin{array}{c|c} 
E_1 & 0~\cdots ~0 \\
\hline
\begin{array}{c}
0 \\
\vdots \\
0 \end{array}
&
\mbox{\LARGE{$M_1$}}
\end{array} \right).
\end{equation}



It is given that the observables $\Pi_0 - \Pi_1$ corresponding to different variables in the same context commute.
It follows that the observables $E_0 - E_1$ corresponding to different variables in the same context also commute.
Hence the proof for Case~2 follows from Case~1.

\end{proof}

\section{Proving gaps on the maximum quantum success probability}\label{sec:gap}

Theorem~\ref{thm:characterization} does not address strategies that employ more exotic kinds of infinite entanglement than expressed by Eq.~(\ref{eq:countable-entanglement}).
In particular, 
we have not ruled out the possibility that a binary constraint system exists for which there is no perfect strategy employing finite (or countably infinite) entanglement, but for which there is an infinite sequence of strategies, $\mathcal{P}_1, \mathcal{P}_2, \mathcal{P}_3, \dots$, where strategy $\mathcal{P}_d$ uses entanglement of the form
\begin{equation}\label{eq:max-ent}
\ket{\psi} = \frac{1}{\sqrt d}\sum_{j=1}^{d} \ket{j}\ket{j},
\end{equation}
and succeeds with probability $p_d < 1$ such that $\lim_{d \rightarrow \infty} p_d = 1$.
In this section, we show that this cannot happen for certain parity BCS games.

Consider the BCS illustrated in Figure~\ref{fig:4-line}, that we will refer to as the 
\textit{four-line} BCS (with each pair of lines intersecting)
\begin{figure}[ht!]
\begin{center}
\setlength{\unitlength}{0.85mm}
\begin{picture}(50,32)(3,6)
\linethickness{0.5pt}
\put(6.5,32.7){\line(2,-3){7.1}}
\put(16.4,17.9){\line(2,-3){7.2}}
\put(8,35){\line(1,0){14.2}}
\put(27.6,35){\line(1,0){14.7}}
\put(24.68,33){\line(0,-1){5.5}}
\put(25.32,33){\line(0,-1){5.5}}
\put(24.68,23){\line(0,-1){15.5}}
\put(25.32,23){\line(0,-1){15.5}}
\put(17.6,21.3){\line(2,1){5.8}}
\put(27.8,26.4){\line(2,1){14.9}}
\put(3.5,34.4){$v_1$}
\put(23,34.5){$v_2$}
\put(43,34.5){$v_3$}
\put(23.5,24.5){$v_4$}
\put(13.5,19.5){$v_5$}
\put(23.5,4.5){$v_6$}
\end{picture}
\end{center}
\caption{\small Structure of four-line BCS.}\label{fig:4-line}
\end{figure}
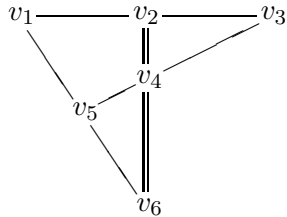
\newline which corresponds to the system of equations 
\begin{eqnarray}\label{eq:4-line}
v_1 \oplus v_2 \oplus v_3 & = & 0  \nonumber \\
v_3 \oplus v_4 \oplus v_5 & = & 0  \\
v_5 \oplus v_6 \oplus v_1 & = & 0  \nonumber \\
v_2 \oplus v_4 \oplus v_6 & = & 1. \nonumber
\end{eqnarray}

Speelman~\cite{Speelman2011} showed that this has no quantum satisfying assignment 
by some simple algebra.
Applying a sequence of substitutions among the observables yields:
\begin{eqnarray} 
A_1 A_2 A_3 & = & I 
\ \ \ \mbox{(corresponding to the first constraint)} \\
A_1 A_2 A_4  A_5 & = & I 
\ \ \ \mbox{(substituting $A_3 = A_4 A_5$ from the second constraint)} \\
A_1 A_2 A_4 A_6 A_1 & = & I 
\ \ \ \mbox{(substituting $A_5 = A_6 A_1$ from the third constraint)} \\
A_1 A_2 A_2 A_6 A_6 A_1 & = & -I 
\ \ \ \mbox{(substituting $A_4 = -A_2 A_6$ from the fourth constraint).} \label{eq:contradiction}
\end{eqnarray}
Note that Eq.~(\ref{eq:contradiction}), which we refer to as the \textit{final} equation of the process, simplifies to the contradiction $I = -I$ because each $A_{j}$ squares to $I$.

This \textit{substitution method} works for many other parity BCSs, where the
general methodology is to apply a sequence of substitutions to obtain a final equation (along the lines of Eq.~(\ref{eq:contradiction})) that has the property that it simplifies to $I = -I$ using $A^{2}_{j} = I$ (but not 
assuming any additional commutations).
For example, it is straightforward to use this method to show that the ``truncated pentagram" of Figure~\ref{fig:four} 
has no quantum satisfying assignment.
\begin{figure}[ht!]
\begin{center}
\setlength{\unitlength}{1mm}
\begin{picture}(40,25)(0,0)
\linethickness{0.5pt}
\put(2,19.7){\line(1,0){10.5}}
\put(16.8,19.7){\line(1,0){5.8}}
\put(26.8,19.7){\line(1,0){10.5}}
\put(2,20.3){\line(1,0){10.5}}
\put(16.8,20.3){\line(1,0){5.8}}
\put(26.8,20.3){\line(1,0){10.5}}
\put(14.3,18){\line(-1,-3){1.8}}
\put(11.3,9){\line(-1,-3){3.6}}
\put(25.7,17.5){\line(1,-3){1.6}}
\put(28.6,9){\line(1,-3){3.6}}
\put(38.2,18.5){\line(-4,-3){8.5}}
\put(26.1,9.4){\line(-4,-3){4.1}}
\put(18.2,3.9){\line(-4,-3){8.7}}
\put(1.7,18.5){\line(4,-3){8.5}}
\put(14.7,9.0){\line(4,-3){3.8}}
\put(22.3,3.1){\line(4,-3){8.0}}
\put(-2,19){$v_7$}
\put(13,19){$v_5$}
\put(23,19){$v_1$}
\put(38,19){$v_8$}
\put(10.9,10){$v_4$}
\put(26.8,10){$v_2$}
\put(18.5,4){$v_3$}
\put(6,-4.5){$v_6$}
\put(30.8,-4.5){$v_9$}
\end{picture}
\end{center}
\caption{\small Structure of truncated pentagram BCS.}\label{fig:four}
\end{figure}
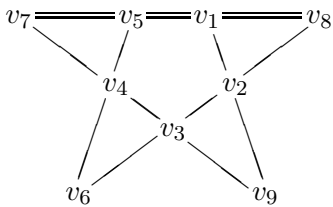

Can a sequence of strategies have success probability approaching 1 for any of these examples?
We prove the following theorem.

\begin{theorem}\label{thm:gap}
Whenever a parity BCS can be proven not to be quantum satisfiable by the substitution method, 
there exists a constant $\varepsilon > 0$ such that, for any strategy for the BCS game that uses finite entangement of the form of Eq.~(\ref{eq:max-ent}), the success probability is upper bounded by $1 - \varepsilon$ (where $\varepsilon$ is independent of the dimension of the entanglement).
\end{theorem}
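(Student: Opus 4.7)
The plan is to run the given substitution-method proof of unsatisfiability approximately on $\ket{\psi}$, deriving a contradiction that forces the failure probability $\eta := 1 - p_d$ to exceed an absolute constant. Let $N$ denote the total number of (constraint, variable) query pairs; since the verifier samples uniformly, a global failure probability at most $\eta$ implies that each individual query fails with probability at most $N\eta$. Define Alice's commuting binary observables $A^{(s)}_t$ for each variable $v_t$ in constraint $c_s$ (as in Case~1 of the proof of Theorem~\ref{thm:characterization}) and Bob's binary observables $B_t$. Set $\tilde A_t := B_t^T$; the maximally-entangled-state identity $A \otimes I \ket{\psi} = I \otimes A^T \ket{\psi}$ then converts the per-query consistency condition into the approximate non-contextuality bound $\| (A^{(s)}_t - \tilde A_t) \otimes I \ket{\psi}\| \leq 2\sqrt{N\eta}$, and the per-query parity-satisfaction condition into $\|(\prod_{j \in c_s} A^{(s)}_j - (-1)^{b_s} I) \otimes I \ket{\psi}\| \leq 2\sqrt{N\eta}$, where $b_s \in \{0,1\}$ is the right-hand side of $c_s$.

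From these I would derive, for each constraint $c_s$ and each pivot variable $j_0 \in c_s$, an approximate substitution identity $\|(\tilde A_{j_0} - (-1)^{b_s}\!\!\prod_{j \in c_s,\, j \ne j_0} \tilde A_j) \otimes I \ket{\psi}\| \leq C\sqrt{\eta}$, with $C$ a constant depending only on the BCS and with the product order matching the one used in the classical substitution proof. This follows by left-multiplying the parity bound by $A^{(s)}_{j_0}$ (valid because the $A^{(s)}_j$ commute within $c_s$ and $(A^{(s)}_{j_0})^2 = I$) and then swapping each $A^{(s)}_j$ with $\tilde A_j$ one at a time; each swap adds only $O(\sqrt\eta)$ to the error because for any unitaries $L, R$ acting on Alice's side, $\|L X R \otimes I \ket{\psi}\| = \|X \otimes I \ket{\psi}\|$---the left factor by ordinary unitarity and the right factor by the maximally-entangled-state identity. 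I would then run the substitution proof step by step, maintaining the invariant that after $i$ steps the current word $W_i$ satisfies $\|(W_i - \sigma_i I) \otimes I \ket{\psi}\| \leq O(i\sqrt\eta)$, where $\sigma_i \in \{\pm 1\}$ is the sign accumulated from the constraints used. Each step replaces a single $\tilde A_{j_0}$ in the middle of $W_i$ by its substitution identity; the surrounding unitary factors cost nothing by the observation above, so one step adds only $O(\sqrt\eta)$ to the invariant.

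After the $k$ substitutions that define the proof---with $k$ a constant determined by the BCS alone---one has $\|(W_k - \sigma_k I) \otimes I \ket{\psi}\| \leq O(k \sqrt\eta)$ with $\sigma_k = -1$ by hypothesis. Simultaneously, $W_k$ simplifies to $+I$ exactly using only the adjacent-pair identities $\tilde A_j^2 = I$, which hold on the nose because each $\tilde A_j$ is a genuine binary observable. Combining the two yields $\|2\ket{\psi}\| \leq O(k\sqrt\eta)$, which forces $\eta \geq \varepsilon$ for some $\varepsilon = \Theta(1/k^2)$ that is independent of the dimension $d$ of the entanglement. The main obstacle is propagating errors through a non-commutative word: the $\tilde A_j$'s from different constraints need not commute, and a substitution occurs at an arbitrary position inside the word, but the maximally-entangled-state identity is what turns what would be a multiplicative blow-up of errors into an additive accumulation. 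A secondary subtlety is that this sketch is written for projective Alice measurements; the POVM case should follow by mirroring the Stinespring-dilation step from Case~2 of Theorem~\ref{thm:characterization} and checking that the approximate versions of the identities above survive the dilation.
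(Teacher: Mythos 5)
Your proposal is correct and is essentially the paper's argument: both run the substitution-method derivation approximately on the maximally entangled state and exploit the fact that right-multiplication by a unitary on Alice's side is equivalent to a unitary acting on Bob's side (the transpose trick), so that the per-step errors accumulate additively over the $k$ substitution steps and yield a dimension-independent gap of order $1/k^2$. The only differences are bookkeeping: you track Euclidean norms and canonicalize every observable to Bob's transposed observable $\tilde A_t = B_t^T$ so that the final word collapses to $I$ exactly, whereas the paper tracks angles (via its approximate-cancellation lemma for three nearly co-planar vectors) and instead approximately cancels the leftover pair $A_{t,s_1}A_{t,s_2}$ of Alice's two contextual observables at each step; the two accountings are equivalent up to constants.
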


\begin{proof}
As in the proof of Theorem~\ref{thm:characterization}, we can assume, without loss of generality, that Alice has 
commuting observables for each constraint and that their product is $\pm I$ in correspondence 
with the constraint.
This is because it is never advantageous for Alice to return bits that do not satisfy the 
constraint.
The difference here from the setting of the proof of Theorem~\ref{thm:characterization} is that, since Alice and Bob's bits do not have to be perfectly consistent,
Alice's observables can be contextual.
Thus, for each constraint $c_s$ and each variable $v_t$ within that constraint, there exists an observable $A_{t,s}$ that represents Alice's measurement for variable $v_t$ in the context of constraint $c_s$.
On Bob's side there remains one observable $B_t$ associated with each variable $v_t$.

We will show that, if the probability of Alice and Bob's bits being consistent is too high, then the substitution method still yields a contradiction.
Since the observables are contextual, the substituted variables are not actually eliminated; instead, a product of two versions (in different contexts) appears.
For example, the final equation for the four-lines BCS is 
$A^{\ }_{1} A^{\ }_{2} A^{\ }_{3} A^{\prime}_{3} A^{\ }_{4} A^{\prime}_{4} A^{\prime}_{2} A^{\ }_{6} A^{\ }_{5} A^{\prime}_{5} A^{\prime}_{6} A^{\prime}_{1} = -I$
(rather than Eq.~(\ref{eq:contradiction})), where $A_j$ and $A^{\prime}_{j}$ correspond to the two contexts of each variable $v_j$.
Let us suppose that the minimim consistency probability over all question pairs is $\cos\theta$ for some $\theta \ge 0$ (we will derive a lower bound on $\theta$).

For all question pairs, $(s_1,t)$ and $(s_2,t)$,
\begin{eqnarray}
(\bra{\psi}A_{t,s_1} \otimes I)
\cdot (I \otimes B_t \ket{\psi}) 
\ = \ 
\bra{\psi}A_{t,s_1} \otimes B_t \ket{\psi} & \ge &\cos\theta \\
(\bra{\psi}A_{t,s_2} \otimes I)
\cdot (I \otimes B_t \ket{\psi})
\ = \ 
\bra{\psi}A_{t,s_2} \otimes B_t \ket{\psi} & \ge & \cos\theta.
\end{eqnarray}
Our first observation is that,
\begin{equation}\label{eq:damage}
(\bra{\psi}A_{t,s_1} \otimes I)
\cdot (A_{t,s_2} \otimes I \ket{\psi}) 
= 
\bra{\psi}A_{t,s_1} A_{t,s_2} \otimes I\ket{\psi}
\ge \cos(2\theta).
\end{equation} 
This follows by considering the inner products among the three vectors 
$\bra{\psi}A_{t,s_1} \otimes I$, 
$\bra{\psi} I \otimes B_t$, 
$\bra{\psi}A_{t,s_2} \otimes I$ and noting that the extremal case is when they 
are co-planar.

Since, in general, $A_{t,s_1} \neq A_{t,s_2}$, the final equation from the substitution approach does 
not reduce to $I = -I$ (we cannot assume $A_{t,s_1} A_{t,s_2} = I$).
Nevertheless, Eq.~(\ref{eq:damage}) enables us to obtain a quantitative version of the contradiction, via the following lemma.

\begin{lemma}[approximate cancellation]\label{lemma:approx-sub}
Suppose that $\ket{\psi}$ is the maximally entangled state  
\begin{equation}
\ket{\psi} = \frac{1}{\sqrt d}\sum_{i=j}^{d} \ket{j}\ket{j}
\end{equation}
and let $A, B, B^{\prime}, C$ be binary observables such that 
\begin{equation}
\bra{\psi} A B B^{\prime} C \otimes I \ket{\psi} \ge \cos\Theta
\end{equation}
and
\begin{equation}
\bra{\psi}B^{\prime} B \otimes I\ket{\psi} \ge \cos(2\theta).
\end{equation}
Then 
\begin{equation}
\bra{\psi} A C\otimes I \ket{\psi} 
\ge \cos(\Theta + 2\theta).
\end{equation}
\end{lemma}

\begin{proof}[Proof of Lemma~\ref{lemma:approx-sub}]
Consider the vectors
\begin{eqnarray}
& & \bra{\psi}A B\otimes I \\
& & \bra{\psi}C B^{\prime} \otimes I \\
& & \bra{\psi}C B \otimes I.
\end{eqnarray}
The inner product between the first two vectors is at least $\cos\Theta$.
The inner product between the second and third vectors is 
\begin{eqnarray}
\bra{\psi}C B^{\prime} B C \otimes I \ket{\psi} 
& = &
\bra{\psi} B^{\prime} B \otimes C^T\!\cdot C^T \ket{\psi} \\
& = &
\bra{\psi} B^{\prime} B \otimes (C C)^T \ket{\psi} \\
& = &
\bra{\psi} B^{\prime} B \otimes I \ket{\psi} \\
& \ge & \cos(2\theta).
\end{eqnarray}
Therefore, the inner product between the first and third vector is at least 
$\cos(\Theta + 2\theta)$ because the extremal case is when the three vectors are co-planar.
Since $A B B C = A C$, this completes the proof of the lemma.
\end{proof}

Returning to the proof of Theorem~\ref{thm:gap},
we can apply Lemma~\ref{lemma:approx-sub} and Eq.~(\ref{eq:damage}) for each simplification step in the final equation arising from the substitution approach.
If there are $k$ such steps then we obtain
\begin{equation}
-1 = -\bra{\psi} I \otimes I \ket{\psi} \ge \cos(2k \theta),
\end{equation}
which is a contradiction unless $2k \theta \ge \pi$.
It follows that the minimum success probability over all questions is upper bounded by 
$\cos(\pi/2k)$, which is strictly below $1$.
\end{proof}

For the four-line BCS game (Figure~\ref{fig:4-line}, Eqns.~(\ref{eq:4-line})) the above approach implies that the minimum success probability over all questions is at most $(1+\cos(\pi/12))/2 = \cos^2(\pi/24) \approx 0.9830$.
From this we can immediately bound the value of the game by 
$1 - (1/12)\sin^2(\pi/24) \approx 0.9986$ (assuming a uniform distribution on questions).
We can obtain a better bound by considering averages of consistency probabilities rather than the minimum consistency probability, namely $\cos^2(\pi/24) \approx 0.9830$.
For comparison, the classical value of the four-line BCS game is $11/12 \approx 0.9167$.

\section{Conclusion}\label{sec:conclusion}

\subsection{Related work}\label{subsec:related}

There has been some interesting related work after the appearance of the first version of this article. Arkhipov~\cite{Arkhipov2012} studied parity BCS games where every variable appears 
exactly twice. He showed that any such game has perfect strategy if and only if a related \textit{dual} graph of the game is non-planar. The result combines elegant techniques with 
Kuratowski's theorem and our characterization of perfect strategies.

Very recently, Ji~\cite{Ji2013} showed that interesting examples like quantum chromatic number and Kochen-Specker sets can be described in the BCS game framework. He used special gadgets, 
called \textit{commutativity} gadgets,  to show reductions between various BCS's which preserve satisfiability using quantum assignments. Also, he showed that for all $k$, there exists a parity BCS game
which requires at least $k$ entangled qubits to play perfectly.

\subsection{Open questions}\label{subsec:open}

There are many questions left open by this work. We have a characterization of perfect strategies for BCS games. It shows that there always exists a perfect strategy using maximal entanglement 
if a perfect entangled strategy exist. Still, given a game, deciding whether it has a perfect strategy is open. 
Theorem~\ref{thm:gap} (which pertains to sequences of strategies whose success probability approaches $1$ in the limit) relies on the assumption that players use maximally entangled states. 
Is this assumption necessary? 

There are questions pertaining to the \textit{optimal} values of BCS games that admit no perfect strategies, such as computing them. 
Another question is whether there always exists an optimal strategy for a BCS game which uses maximally entangled states.


All of the above questions can be asked for general non-local games too. For the case of XOR games, the optimal value is given by a semidefinite program~\cite{CleveH+2004, Tsirelson1980}. 
This shows how to compute the optimal value of the game and that there always exist an optimal strategy which uses maximally entangled states~\cite{CleveH+2004}. It is also known for graph 
coloring games (like BCS games) that there always exists a perfect strategy using maximal entanglement (if a perfect entangled strategy exist)~\cite{CameronM+2007}. But whether this is true for general games that have perfect strategies remains open.

\section{Acknowledgments}

We are grateful for discussions about this project with many people, including Alex Arkhipov, Harry Buhrman, Sevag Gharibian, Tsuyoshi Ito, Kazuo Iwama, Zhengfeng Ji, Hirotada Kobayashi, Fran\c{c}ois Le Gall, Laura Mancinska (for pointing out an error 
in a previous version of this manuscript pertaining to the analysis of the case of POVM measurements), Oded Regev, Florian Speelman, Sarvagya Upadhyay, John Watrous, and Ronald de Wolf.
Some of this work took place while the first author was visiting Amsterdam's CWI in 2011.
This work is partially supported by Canada's NSERC and CIFAR.


\end{document}